\theoremstyle{plain}
\newtheorem{thm}{Theorem}[section]
\newtheorem{defi}[thm]{Definition}
\newtheorem{lem}[thm]{Lemma}
\newtheorem{fact}[thm]{Fact}
\begin{document}
\title{Permutation-Invariant Quantum Codes for Deletion Errors}

\author{
Taro Shibayama \thanks{
Department of Mathematics and Informatics,
Graduate School of Science,
Chiba University
1-33 Yayoi-cho, Inage-ku, Chiba City,
Chiba Pref., JAPAN, 263-0022
}
\and
Manabu Hagiwara \footnotemark[1]
}

\date{}
\maketitle

\begin{abstract}
This paper presents conditions for constructing permutation-invariant quantum codes for deletion errors and provides a method for constructing them.
Our codes give the first example of quantum codes that can correct two or more deletion errors.
Also, our codes give the first example of quantum codes that can correct both multiple-qubit errors and multiple-deletion errors.
We also discuss a generalization of the construction of our codes at the end.
\end{abstract}

%%%%%%%%%%%%%%%%%%%%%%%%%%%%%%%%%%%%%%%%%%%%%%%%%%%%%%%%%%%%%%%%%%%%
%%%%%%%%%%%%%%%%%%%%%%%%%%%%%%%%%%%%%%%%%%%%%%%%%%%%%%%%%%%%%%%%%%%%
%%%%%%%%%%%%%%%%%%%%%%%%%%%%%%%%%%%%%%%%%%%%%%%%%%%%%%%%%%%%%%%%%%%%
\section{Introduction}
Quantum error-correcting codes are one of the essential factors for the development of quantum computers, which were first introduced by Shor in 1995 \cite{Peter1995}.
Since then, many codes have been constructed, e.g., CSS codes\cite{Robert1996}, stabilizer codes\cite{Gottesman1997}, surface codes\cite{Fowler2012}, etc.

Recently, in 2019, Leahy et al. provided a way to turn quantum insertion-deletion errors into errors that can be handled by conventional methods under the specific assumption \cite{Leahy2019}.
Since then, quantum deletion error-correction have attracted  a lot of attention from researchers.
In quantum information theory, quantum deletion error-correction is a problem of determining the quantum state in the entire quantum system from a quantum state in a partial system.
Therefore it is related to various topics, e.g., quantum erasure error-correcting codes \cite{Grassl1997}, quantum secret sharing \cite{Hillery1999}, purification of quantum state \cite{Hughston1993}, quantum cloud computing \cite{Biamonte2017}, etc.

The first quantum deletion codes under the general scenario was constructed in 2020 \cite{Nakayama20201}.
Since then, several quantum deletion codes have been constructed so far \cite{Hagiwara20202,Nakayama20202,Shibayama2020}.
However, research on quantum deletion codes is not yet sufficiently advanced.
All the quantum deletion codes that have already been found can correct only single-deletion errors.
In other words, quantum error-correcting codes that can correct two or more deletion errors have not been found to date.
Furthermore, in classical coding theory, the construction of codes that can correct both substitution and deletion errors has attracted much attention \cite{smagloy2020,song2020}; however, no such codes have yet been found in quantum coding theory.

In this paper, we consider permutation-invariant quantum codes, which are invariant under any permutation of the underlying particles.
Permutation-invariant quantum codes have been constructed using a variety of techniques\cite{ruskai2000,pollatsek2004,Ouyang2014,Ouyang2016,Ouyang2017,OC2019}.
Recently, permutation-invariant quantum codes have been explored for applications such as for quantum storage\cite{Ouyang2020} and robust quantum metrology\cite{OSM2019}.
These are expected to be applied to physically realistic scenarios\cite{WWG2019}.

This paper is organized as follows.
In Section \ref{pre}, some notations and definitions are introduced.
In Section \ref{delsection}, the deletion error-correcting codes as  quantum codes are defined and the main theorem is stated, and in Section \ref{codeconst}, we give the proof of the main theorem.
In Section \ref{example}, we present two families of our codes.
The first gives quantum codes that can correct any number of deletion errors, and the code can also correct multiple qubit errors.
The second gives quantum codes that can correct single-deletion errors that have been found so far; these are also included in our codes.
In Section \ref{generalize}, we consider a generalization of our codes.
Finally, this paper is summarized in Section \ref{conc}.

%%%%%%%%%%%%%%%%%%%%%%%%%%%%%%%%%%%%%%%%%%%%%%%%%%%%%%%%%%%%%%%%%%%%
%%%%%%%%%%%%%%%%%%%%%%%%%%%%%%%%%%%%%%%%%%%%%%%%%%%%%%%%%%%%%%%%%%%%
%%%%%%%%%%%%%%%%%%%%%%%%%%%%%%%%%%%%%%%%%%%%%%%%%%%%%%%%%%%%%%%%%%%%

\section{Preliminaries}\label{pre}
Throughout this paper, we fix the following notation.
Let $N\geq2$ be an integer and $[N]\coloneqq\{1,2,\dots ,N\}$.
For a square matrix $M$ over 
the complex field $\mathbb{C}$, we denote by ${\rm Tr}(M)$ the sum of the diagonal elements of $M$.
Set $|0\rangle, |1\rangle\in\mathbb{C}^2$ as $|0\rangle\coloneqq(1,0)^\top$, $|1\rangle\coloneqq(0,1)^\top$, and $|\bm{x}\rangle\coloneqq|x_1\rangle\otimes|x_2\rangle\otimes\dots\otimes|x_N\rangle\in\mathbb{C}^{2\otimes N}$ for a bit sequence $\bm{x}=x_1x_2\cdots x_N\in\{0,1\}^N$.
Here $\otimes$ is the tensor product operation and $\top$ is the transpose operation.
Let $\langle\bm{x}|\coloneqq|\bm{x}\rangle^\dag$ denote the conjugate transpose of $|\bm{x}\rangle$.
We define the Hamming weight ${\rm wt}(\bm{x})$ of $\bm{x}$ by ${\rm wt}(\bm{x})\coloneqq\#\{p\in [N]\mid x_p\neq0\}$.

A positive semi-definite Hermitian matrix of trace $1$ is called a density matrix.
We denote by $S(\mathbb{C}^{2\otimes N})$ the set of all density matrices of order $2^N$. 
An element of $S(\mathbb{C}^{2\otimes N})$ is called a quantum state.
We also use a complex vector $|\psi\rangle\in\mathbb{C}^{2\otimes N}$ for representing a pure quantum state $|\psi\rangle\langle\psi|\in S(\mathbb{C}^{2\otimes N})$.

Let $M=\sum_{\bm{x},\bm{y}\in\{0,1\}^N}m_{\bm{x},\bm{y}}|x_1\rangle\langle y_1|\otimes\cdots\otimes|x_N\rangle\langle y_N|$ be a square matrix with $m_{\bm{x},\bm{y}}\in\mathbb{C}$.
For an integer $p\in [N]$, define a map ${\rm Tr}_p:S(\mathbb{C}^{2\otimes N})\rightarrow S(\mathbb{C}^{2\otimes (N-1)})$ as 
\begin{align*}
{\rm Tr}_p(M)\coloneqq&\sum_{\bm{x},\bm{y}\in\{0,1\}^N}m_{\bm{x},\bm{y}}\cdot {\rm Tr}(|x_p\rangle\langle y_p|)|x_1\rangle\langle y_1|\otimes\\
&\cdots\otimes|x_{p-1}\rangle\langle y_{p-1}|\otimes|x_{p+1}\rangle\langle y_{p+1}|\otimes\\
&\cdots\otimes|x_{N}\rangle\langle y_{N}|.
\end{align*}
The map ${\rm Tr}_p$ is called a partial trace.

%%%%%%%%%%%%%%%%%%%%%%%%%%%%%%%%%%%%%%%%%%%%%%%%%%%%%%%%%%%%%%%%%%%%
%%%%%%%%%%%%%%%%%%%%%%%%%%%%%%%%%%%%%%%%%%%%%%%%%%%%%%%%%%%%%%%%%%%%
%%%%%%%%%%%%%%%%%%%%%%%%%%%%%%%%%%%%%%%%%%%%%%%%%%%%%%%%%%%%%%%%%%%%

\section{Deletion Error-Correcting Codes\\ and Main Contribution}\label{delsection}

Recall that in classical coding theory, for an integer $1\leq t<N$, a $t$-deletion error is defined as a map from a bit sequence of length $N$ to its subsequence of length $N-t$.
We define a $t$-deletion error in the terms of quantum coding theory.

\begin{defi}[Deletion Error $D_P$] Let $1\leq t<N$ be an integer.
For a set $P=\{p_1,\dots,p_t\}\subset [N]$ with $p_1<\dots<p_t$, define a map $D_P:S(\mathbb{C}^{2\otimes N})\rightarrow S(\mathbb{C}^{2\otimes (N-t)})$ as 
\begin{align*}
D_P(\rho)\coloneqq{\rm Tr}_{p_1}\circ\dots\circ{\rm Tr}_{p_t}(\rho),
\end{align*}
where $\rho\in S(\mathbb{C}^{2\otimes N})$ is a quantum state.
Here the symbol $\circ$ indicates the composition of maps.
We call the map $D_P$ a $t$-deletion error with the deletion position $P$.
The cardinality $|P|$ is called the number of deletions for $D_P$.
\end{defi}

In this paper, an integer $1\leq t<N$ is fixed and denotes a number of deletions, and a set $P\subset [N]$ denotes the deletion position satisfying $|P|=t$.

\begin{defi}[Deletion Error-Correcting Code]\label{DECC}
We call an image of ${\rm Enc}$ an $[N,K]$ $t$-deletion error-correcting code if the following conditions hold.
\begin{itemize}
\item There exists a map ${\rm Enc}:\mathbb{C}^{2\otimes K}\rightarrow\mathbb{C}^{2\otimes N}$
 defined as the composition of two maps ${\rm enc}\circ{\rm pad}^{N,K}$,
 where the map ${\rm pad}^{N,K}:\mathbb{C}^{2\otimes K}\rightarrow\mathbb{C}^{2\otimes N}$ is defined by ${\rm pad}^{N,K}(|\psi\rangle)\coloneqq|\psi\rangle\otimes|0\rangle\otimes\dots\otimes|0\rangle$, and the map ${\rm enc}:\mathbb{C}^{2\otimes N}\rightarrow\mathbb{C}^{2\otimes N}$ is a unitary transformation acting on $\mathbb{C}^{2\otimes N}$.
\item There exists a map ${\rm Dec}:S(\mathbb{C}^{2\otimes(N-t)})\rightarrow \mathbb{C}^{2\otimes K}$ defined by the operations allowed by quantum mechanics, 
and
\begin{align*}
{\rm Dec}\circ D_P\circ{\rm Enc}(|\psi\rangle)=|\psi\rangle
\end{align*}
holds for any quantum state $|\psi\rangle\in\mathbb{C}^{2\otimes K}$ and any deletion position $P\subset [N]$ satisfying $|P|=t$.
\end{itemize}
In other words, there exist an encoder ${\rm Enc}$ and a decoder ${\rm Dec}$ that correct any $t$-deletion errors.
\end{defi}

Note that a $t$-deletion error-correcting code is an $s$-deletion error-correcting code for any positive integer $s\leq t$. 

The following Definition \ref{del} is the conditions for constructing our codes.
Note that for binomial coefficients, if $w<0$ or $N<w$, we define $\binom{N}{w}\coloneqq0$.

\begin{defi}[Conditions (D1), (D2), and (D3)]\label{del}
For two non-empty sets $A,B\subset\{0,1,\dots,N\}$ and a map $f:A\cup B\rightarrow\mathbb{C}$,
define three conditions (D1), (D2), and (D3) as follows:\smallskip
\begin{itemize}
\setlength{\leftskip}{0.4cm}
\item[\textit{(D1)}]
$\displaystyle \sum_{w\in A}|f(w)|^2{\binom{N}{w}}=\sum_{w\in B}|f(w)|^2{\binom{N}{w}}=1$.\smallskip
\item[\textit{(D2)}] For any integer $0\leq k\leq t$,
\begin{align*}
\sum_{w\in A}|f(w)|^2{\binom{N-t}{w-k}}=\sum_{w\in B}|f(w)|^2{\binom{N-t}{w-k}}\neq0.
\end{align*}
\item[\textit{(D3)}] For any integers $w_1,w_2\in A\cup B$,
\begin{align*}
w_1\neq w_2~\Longrightarrow ~|w_1-w_2|>t.
\end{align*}
\end{itemize}
\end{defi}

The following Theorem \ref{main} is the main theorem of this paper and describes a new construction method for quantum deletion error-correcting codes.
\begin{thm}\label{main}
Let $A,B\subset\{0,1\}^N$ be non-empty sets with $A\cap B=\varnothing$ and $f:A\cup B\rightarrow\mathbb{C}$ be a map satisfying the conditions (D1), (D2), and (D3).
Then the code $Q_{A,B}^f$ is an $[N,1]$ $t$-deletion error-correcting code with the encoder ${\rm Enc}_{A,B}^f$ and the decoder ${\rm Dec}_{A,B}^f$.
\end{thm}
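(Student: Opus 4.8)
The plan is to work with the explicit Dicke-state form of the codewords. Recall that the codewords of $Q_{A,B}^f$ are the permutation-invariant states $|0_L\rangle = \sum_{w\in A}f(w)\sum_{{\rm wt}(\bm{x})=w}|\bm{x}\rangle$ and $|1_L\rangle = \sum_{w\in B}f(w)\sum_{{\rm wt}(\bm{x})=w}|\bm{x}\rangle$, with ${\rm Enc}_{A,B}^f$ the isometry $|0\rangle\mapsto|0_L\rangle$, $|1\rangle\mapsto|1_L\rangle$ (precomposed with padding). First I would check that $\{|0_L\rangle,|1_L\rangle\}$ is orthonormal: $\langle0_L|0_L\rangle=\langle1_L|1_L\rangle=1$ is exactly condition (D1), and $\langle0_L|1_L\rangle=0$ holds because $A\cap B=\varnothing$ means no Hamming weight occurs in both sums. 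Next, since both codewords are invariant under every qubit permutation and any two $t$-element subsets of $[N]$ are related by one, the channel $D_P\circ{\rm Enc}_{A,B}^f$ does not depend on the choice of $P$ with $|P|=t$; I would fix $P=\{N-t+1,\dots,N\}$ and write $D_P(\rho)=\sum_{\bm{a}\in\{0,1\}^t}K_{\bm{a}}\rho K_{\bm{a}}^\dagger$ with $K_{\bm{a}}=I^{\otimes(N-t)}\otimes\langle a_1|\otimes\dots\otimes\langle a_t|$.

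The computational core is to evaluate $K_{\bm{a}}$ on the code. A direct count shows $K_{\bm{a}}|0_L\rangle$ depends on $\bm{a}$ only through $k\coloneqq{\rm wt}(\bm{a})$, namely $K_{\bm{a}}|0_L\rangle=|\phi_k^A\rangle\coloneqq\sum_{w\in A}f(w)\sum_{{\rm wt}(\bm{y})=w-k}|\bm{y}\rangle$, a vector in $\mathbb{C}^{2\otimes(N-t)}$, and likewise $K_{\bm{a}}|1_L\rangle=|\phi_k^B\rangle$ with $B$ in place of $A$. Writing these in the orthogonal Dicke basis, $|\phi_k^A\rangle$ is supported on the Hamming weights $\{w-k:w\in A\}$ and $|\phi_{k'}^B\rangle$ on $\{w-k':w\in B\}$; condition (D3), together with $|k-k'|\le t$ and $A\cap B=\varnothing$, forces all these weight sets, over the various $k$ and over $A$ versus $B$, to be pairwise disjoint. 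Hence $\langle\phi_k^A|\phi_{k'}^A\rangle=\langle\phi_k^B|\phi_{k'}^B\rangle=c_k\,\delta_{k,k'}$ and $\langle\phi_k^A|\phi_{k'}^B\rangle=0$ for all $k,k'$, where $c_k=\sum_{w\in A}|f(w)|^2\binom{N-t}{w-k}=\sum_{w\in B}|f(w)|^2\binom{N-t}{w-k}$; the equality of these two expressions and the fact that $c_k\neq0$ are exactly condition (D2). Thus $\langle i_L|K_{\bm{a}}^\dagger K_{\bm{b}}|j_L\rangle=c_{{\rm wt}(\bm{a})}\,\delta_{{\rm wt}(\bm{a}),{\rm wt}(\bm{b})}\,\delta_{i,j}$, which is precisely the Knill--Laflamme condition for correcting the channel $D_P$.

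To finish I would exhibit ${\rm Dec}_{A,B}^f$ explicitly rather than just invoke an abstract recovery map. Let $\Pi_k$ project $\mathbb{C}^{2\otimes(N-t)}$ onto the span of the Dicke states of weight in $\{w-k:w\in A\cup B\}\cap\{0,\dots,N-t\}$; by the disjointness above, $\{\Pi_0,\dots,\Pi_t,\,I-\sum_k\Pi_k\}$ is a projective measurement. For $|\psi\rangle=\alpha|0\rangle+\beta|1\rangle$ one computes $D_P\circ{\rm Enc}_{A,B}^f(|\psi\rangle)=\sum_{k=0}^t\binom{t}{k}|\psi_k\rangle\langle\psi_k|$ with $|\psi_k\rangle=\alpha|\phi_k^A\rangle+\beta|\phi_k^B\rangle\in{\rm range}(\Pi_k)$; measuring $\{\Pi_k\}$ returns outcome $k$ with probability $\binom{t}{k}c_k$, independent of $\alpha,\beta$ since $\langle\phi_k^A|\phi_k^B\rangle=0$ and $\|\phi_k^A\|^2=\|\phi_k^B\|^2=c_k$, and leaves the normalized state $\alpha|\tilde0_k\rangle+\beta|\tilde1_k\rangle$ where $|\tilde0_k\rangle=|\phi_k^A\rangle/\sqrt{c_k}$ and $|\tilde1_k\rangle=|\phi_k^B\rangle/\sqrt{c_k}$ are orthonormal. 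Picking a unitary $U_k$ on $\mathbb{C}^{2\otimes(N-t)}$ with $U_k|\tilde0_k\rangle=|0\rangle\otimes|g_k\rangle$ and $U_k|\tilde1_k\rangle=|1\rangle\otimes|g_k\rangle$ for a fixed ancilla $|g_k\rangle$, then applying $U_k$ conditioned on the outcome and discarding the last $N-t-1$ qubits, returns $|\psi\rangle\langle\psi|$ in every branch, hence after averaging over $k$. This ${\rm Dec}_{A,B}^f$ satisfies ${\rm Dec}_{A,B}^f\circ D_P\circ{\rm Enc}_{A,B}^f(|\psi\rangle)=|\psi\rangle$ for all $|\psi\rangle$ and all $P$ with $|P|=t$, which is the claim.

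The main obstacle is the bookkeeping in the middle step: establishing $K_{\bm{a}}|i_L\rangle=|\phi_{{\rm wt}(\bm{a})}^{\bullet}\rangle$ correctly, and then extracting from (D3) the disjointness of every relevant Hamming-weight support. This is exactly where the strict inequality $|w_1-w_2|>t$ (rather than $\ge t$) is needed, since the weight shifts $k-k'$ can be as large as $t$ in absolute value. Everything else — normalization via (D1), the nonvanishing constants $c_k$ and the $A/B$-balance via (D2), and the reduction to a single deletion position $P$ — is routine once the codewords are written down.
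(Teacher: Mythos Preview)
Your proposal is correct and is essentially the paper's own argument: your $|\phi_k^A\rangle,|\phi_k^B\rangle,c_k,\Pi_k$ are precisely the paper's $|U_1^k\rangle,|U_2^k\rangle,l_k^2,P_k$, and your measure--rotate--discard decoder coincides with ${\rm Dec}_{A,B}^f$. The only cosmetic difference is that you phrase the intermediate computation as a Knill--Laflamme verification via Kraus operators $K_{\bm a}$, whereas the paper works directly with the post-deletion mixture $\sum_k\binom{t}{k}|\Psi_k\rangle\langle\Psi_k|$.
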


Here, the notations $Q_{A,B}^f$, ${\rm Enc}_{A,B}^f$, and ${\rm Dec}_{A,B}^f$ above are defined in the next section.

%%%%%%%%%%%%%%%%%%%%%%%%%%%%%%%%%%%%%%%%%%%%%%%%%%%%%%%%%%%%%%%%%%%%
%%%%%%%%%%%%%%%%%%%%%%%%%%%%%%%%%%%%%%%%%%%%%%%%%%%%%%%%%%%%%%%%%%%%
%%%%%%%%%%%%%%%%%%%%%%%%%%%%%%%%%%%%%%%%%%%%%%%%%%%%%%%%%%%%%%%%%%%%

\section{Proof of The Main Theorem}\label{codeconst}

In this section, we shall give the proof of Theorem \ref{main}.
The proofs of Lemmas in this section are given in later appendices.

\begin{defi}[Encoder ${\rm Enc}_{A,B}^f$ and Code $Q_{A,B}^f$]\label{Enc}
Let $A,B\subset\{0,1,\dots,N\}$ be non-empty sets with $A\cap B=\varnothing$ and $f:A\cup B\rightarrow\mathbb{C}$ be a map satisfying the conditions (D1), (D2), and (D3).
Let us define an encoder as a linear map ${\rm Enc}_{A,B}^f:\mathbb{C}^2\rightarrow \mathbb{C}^{2\otimes N}$.
For a quantum state $|\psi\rangle=\alpha|0\rangle+\beta|1\rangle\in\mathbb{C}^2$, 
${\rm Enc}_{A,B}^f$ maps the state $|\psi\rangle$ to the following state $|\Psi\rangle$,
\begin{align}
|\Psi\rangle\coloneqq\!\sum_{\substack{\bm{x}\in\{0,1\}^{N}\\{\rm wt}(\bm{x})\in A}}\!\!\alpha f({\rm wt}(\bm{x}))|\bm{x}\rangle
+\!\sum_{\substack{\bm{y}\in\{0,1\}^{N}\\{\rm wt}(\bm{y})\in B}}\!\!\beta f({\rm wt}(\bm{y}))|\bm{y}\rangle.\label{psi2}
\end{align}
Set $Q_{A,B}^f$ as the image of ${\rm Enc}_{A,B}^f$, i.e.,
\begin{align*}
Q_{A,B}^f\coloneqq\{{\rm Enc}_{A,B}^f(|\psi\rangle)\mid |\psi\rangle\in\mathbb{C}^2,|\psi\rangle\langle\psi|\in S(\mathbb{C}^2)\}.
\end{align*}
\end{defi}

The map ${\rm Enc}_{A,B}^f$ can be written as a form ${\rm enc}\circ{\rm pad}^{N,K}$ with some unitary transformation ${\rm enc}$, thus it satisfies the condition of Definition \ref{DECC}.
Note that for the vector $|\Psi\rangle$ defined by Equation (\ref{psi2}),
\begin{align*}
\langle\Psi|\Psi\rangle&=\sum_{\substack{\bm{x}\in\{0,1\}^{N}\\{\rm wt}(\bm{x})\in A}}|\alpha f({\rm wt}(\bm{x}))|^2
+\sum_{\substack{\bm{y}\in\{0,1\}^{N}\\{\rm wt}(\bm{y})\in B}}|\beta f({\rm wt}(\bm{y}))|^2\\
&=\sum_{w\in A}|\alpha f(w)|^2{\binom{N}{w}}+\sum_{w\in B}|\beta f(w)|^2{\binom{N}{w}}\\
&=|\alpha|^2+|\beta|^2\\
&=1
\end{align*}
holds by the condition (D1).
Therefore, $|\Psi\rangle$ is a quantum state.

A quantum state is permutation-invariant if the state is invariant under position permutations.
A quantum code is permutation-invariant if any state of the code is permutation-invariant.
The term permutation-invariant is also called as PI.
The code $Q_{A,B}^f$ is a PI code.
The following Lemma \ref{lem2} describes the state after a deletion error for a PI state.

\begin{lem}\label{lem2}
Let $|\Psi\rangle$ be a pure PI state with
\begin{align}
|\Psi\rangle&\coloneqq\sum_{\bm{x}\in\{0,1\}^N}c({\rm wt}(\bm{x}))|\bm{x}\rangle,\label{psi1}
\end{align}
for a map $c:\{0,1,\dots,N\}\rightarrow\mathbb{C}$.
For an integer $0\leq k\leq t$, 
\begin{align}
|\Psi_k\rangle&\coloneqq\sum_{\bm{x}\in\{0,1\}^{N-t}}c({{\rm wt}(\bm{x})+k})|\bm{x}\rangle.\label{psik}
\end{align}
Then, for any deletion position $P\subset [N]$ satisfying $|P|=t$,
\begin{align*}
D_P(|\Psi\rangle\langle\Psi|)=\sum_{k=0}^t{\binom{\,t\,}{k}}|\Psi_k\rangle\langle\Psi_k|.
\end{align*}
\end{lem}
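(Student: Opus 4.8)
The plan is to compute the action of a single partial trace on a permutation-invariant state of the form \eqref{psi1}, and then iterate, tracking how the ``weight shift'' parameter accumulates. First I would observe that because $|\Psi\rangle$ is permutation-invariant, the density matrix $|\Psi\rangle\langle\Psi|$ is unchanged by any relabeling of the $N$ positions; consequently $D_P(|\Psi\rangle\langle\Psi|)$ depends only on $|P|=t$, not on which positions are deleted. So it suffices to compute $D_P$ for the specific choice $P=\{N-t+1,\dots,N\}$, i.e.\ tracing out the last $t$ qubits, which is notationally the cleanest.

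The core computation is the single-qubit case $t=1$, tracing out the last qubit. Writing $|\Psi\rangle=\sum_{\bm{x}\in\{0,1\}^N} c(\mathrm{wt}(\bm{x}))|\bm{x}\rangle$ and splitting each $\bm{x}=(\bm{x}',x_N)$ with $\bm{x}'\in\{0,1\}^{N-1}$, one gets
\begin{align*}
\mathrm{Tr}_N(|\Psi\rangle\langle\Psi|)
=\sum_{\bm{x}',\bm{y}'\in\{0,1\}^{N-1}}\ \sum_{b\in\{0,1\}} c(\mathrm{wt}(\bm{x}')+b)\,\overline{c(\mathrm{wt}(\bm{y}')+b)}\,|\bm{x}'\rangle\langle\bm{y}'|,
\end{align*}
since $\mathrm{Tr}(|x_N\rangle\langle y_N|)=\delta_{x_N,y_N}$ forces the two last bits to coincide, and the surviving value is summed over $b=x_N=y_N\in\{0,1\}$. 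The $b=0$ term is exactly $|\Psi^{(N-1)}_0\rangle\langle\Psi^{(N-1)}_0|$ and the $b=1$ term is $|\Psi^{(N-1)}_1\rangle\langle\Psi^{(N-1)}_1|$, where $|\Psi^{(N-1)}_j\rangle\coloneqq\sum_{\bm{x}'\in\{0,1\}^{N-1}} c(\mathrm{wt}(\bm{x}')+j)|\bm{x}'\rangle$. In other words, one deletion sends the ``shape function'' $c(\cdot)$ to the pair of shifted shape functions $c(\cdot)$ and $c(\cdot+1)$, and the output is the sum of the corresponding rank-one projectors on $\mathbb{C}^{2\otimes(N-1)}$. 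Crucially, each resulting state is again permutation-invariant of the same form, so the argument can be repeated.

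Then I would set up an induction on $t$. After deleting $j$ qubits ($0\le j\le t$), the claim is that $D_{\{N-j+1,\dots,N\}}(|\Psi\rangle\langle\Psi|)=\sum_{i=0}^{j}\binom{j}{i}|\Phi^{(j)}_i\rangle\langle\Phi^{(j)}_i|$, where $|\Phi^{(j)}_i\rangle\coloneqq\sum_{\bm{x}\in\{0,1\}^{N-j}} c(\mathrm{wt}(\bm{x})+i)|\bm{x}\rangle$. The inductive step applies the single-qubit computation above to each term $|\Phi^{(j)}_i\rangle\langle\Phi^{(j)}_i|$: it splits into the shift-$i$ and shift-$(i+1)$ projectors on $\mathbb{C}^{2\otimes(N-j-1)}$. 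Collecting like terms, the coefficient of $|\Phi^{(j+1)}_i\rangle\langle\Phi^{(j+1)}_i|$ becomes $\binom{j}{i}+\binom{j}{i-1}=\binom{j+1}{i}$, which is precisely Pascal's rule; this is the step that produces the binomial coefficients in the statement. Taking $j=t$ gives $D_P(|\Psi\rangle\langle\Psi|)=\sum_{k=0}^{t}\binom{t}{k}|\Psi_k\rangle\langle\Psi_k|$ with $|\Psi_k\rangle=|\Phi^{(t)}_k\rangle$ as in \eqref{psik}, which is the desired identity.

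I do not expect a genuine obstacle here; the argument is essentially bookkeeping. The one point requiring a little care is the very first reduction: one must justify rigorously that permutation-invariance of $|\Psi\rangle$ makes $D_P$ independent of $P$. This follows because for a permutation $\sigma$ of $[N]$ the corresponding unitary $U_\sigma$ (which permutes tensor factors) fixes $|\Psi\rangle$, hence fixes $|\Psi\rangle\langle\Psi|$, and $\mathrm{Tr}_{\sigma(p)}(U_\sigma M U_\sigma^\dagger)$ equals $U_{\sigma'}\,\mathrm{Tr}_p(M)\,U_{\sigma'}^\dagger$ for the induced permutation $\sigma'$ on the remaining indices; applying this repeatedly moves an arbitrary $P$ to $\{N-t+1,\dots,N\}$ up to an overall permutation unitary on $\mathbb{C}^{2\otimes(N-t)}$, and since $|\Psi_k\rangle$ is itself permutation-invariant the projectors $|\Psi_k\rangle\langle\Psi_k|$ are unaffected by that final conjugation. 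The rest is the induction above, whose only ``trick'' is recognizing Pascal's rule.
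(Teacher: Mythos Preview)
Your argument is correct. It differs from the paper's proof in one structural respect: you proceed by induction on $t$, tracing out one qubit at a time and invoking Pascal's rule to aggregate the coefficients, whereas the paper does the whole computation in one shot. The paper observes the factorization
\[
|\Psi\rangle=\sum_{\bm{y}\in\{0,1\}^t}|\bm{y}\rangle\otimes|\Psi_{\mathrm{wt}(\bm{y})}\rangle,
\]
so that tracing out the first $t$ qubits (again after the permutation-invariance reduction) immediately yields $\sum_{\bm{y}\in\{0,1\}^t}|\Psi_{\mathrm{wt}(\bm{y})}\rangle\langle\Psi_{\mathrm{wt}(\bm{y})}|$, and then grouping by $k=\mathrm{wt}(\bm{y})$ gives the $\binom{t}{k}$ directly as the count of strings of each weight, with no induction needed. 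Your approach has the virtue of making the recursive structure explicit and of isolating the $t=1$ case as the basic building block; the paper's approach is shorter and avoids tracking intermediate states $|\Phi^{(j)}_i\rangle$. Your treatment of the reduction to a fixed $P$ is more detailed than the paper's, which simply asserts it from permutation-invariance.
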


Note that Equation (\ref{psi2}) is obtained in Equation (\ref{psi1}) by setting
\begin{align}\label{cons}
c(w)=
\begin{cases}
\alpha f(w)&w\in A,\\
\beta f(w)&w\in B,\\
0&{\rm otherwise}.
\end{cases}
\end{align}
Equation (\ref{psik}) in our codes can be expressed in good form by the following Lemma \ref{lem4}.
The conditions (D2) and (D3) can be considered as an adaptation of the Knill and Laflamme conditions\cite{knill1997} to PI codes for deletion errors.

\begin{lem}\label{lem4}
Let $A,B\subset\{0,1\}^N$ be non-empty sets with $A\cap B=\varnothing$ and $f:A\cup B\rightarrow\mathbb{C}$ be a map satisfying the conditions (D2) and (D3).
Then for any integer $0\leq k\leq t$, there exist a real number $l_k\neq0$ and vectors $|u_1^k\rangle, |u_2^k\rangle\in\mathbb{C}^{2\otimes(N-t)}$ that satisfy the followings:
\begin{itemize}
\item For a vector $|\Psi_k\rangle$ defined by Equations (\ref{psik}) and (\ref{cons}),
\begin{align*}
|\Psi_k\rangle=l_k(\alpha|u_1^k\rangle+\beta|u_2^k\rangle).
\end{align*}
\item For integers $k_1,k_2\in \{0,1,\dots, t\}$ and $b_1,b_2\in \{1,2\}$,
\begin{align*}
\langle u_{b_1}^{k_1}|u_{b_2}^{k_2}\rangle=
\begin{cases}
1&(k_1,b_1)=(k_2,b_2),\\
0&(k_1,b_1)\neq(k_2,b_2).
\end{cases}
\end{align*}
\end{itemize}
\end{lem}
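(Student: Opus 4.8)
The plan is to construct $|u_1^k\rangle$ and $|u_2^k\rangle$ by splitting $|\Psi_k\rangle$ into its ``$A$-part'' and ``$B$-part'' and normalizing each. For $0\le k\le t$, set
\[
|v_1^k\rangle:=\!\!\sum_{\substack{\bm{x}\in\{0,1\}^{N-t}\\ {\rm wt}(\bm{x})+k\in A}}\!\!f({\rm wt}(\bm{x})+k)|\bm{x}\rangle,\qquad
|v_2^k\rangle:=\!\!\sum_{\substack{\bm{x}\in\{0,1\}^{N-t}\\ {\rm wt}(\bm{x})+k\in B}}\!\!f({\rm wt}(\bm{x})+k)|\bm{x}\rangle.
\]
Because $A\cap B=\varnothing$ and $c(w)=0$ for $w\notin A\cup B$, Equation (\ref{cons}) gives $|\Psi_k\rangle=\alpha|v_1^k\rangle+\beta|v_2^k\rangle$ at once.

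First I would compute the norms. Since exactly $\binom{N-t}{w-k}$ sequences $\bm{x}\in\{0,1\}^{N-t}$ have ${\rm wt}(\bm{x})=w-k$ (this binomial coefficient being $0$ outside $0\le w-k\le N-t$), we get $\langle v_1^k|v_1^k\rangle=\sum_{w\in A}|f(w)|^2\binom{N-t}{w-k}$ and likewise for $B$. Condition (D2) is precisely the statement that these two numbers coincide and are nonzero; write $l_k:=\bigl(\sum_{w\in A}|f(w)|^2\binom{N-t}{w-k}\bigr)^{1/2}>0$ and put $|u_b^k\rangle:=l_k^{-1}|v_b^k\rangle$. Then $|u_1^k\rangle,|u_2^k\rangle$ are unit vectors and $|\Psi_k\rangle=l_k(\alpha|u_1^k\rangle+\beta|u_2^k\rangle)$, so the first bullet and the diagonal case $(k_1,b_1)=(k_2,b_2)$ of the second bullet hold.

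It remains to check $\langle u_{b_1}^{k_1}|u_{b_2}^{k_2}\rangle=0$ when $(k_1,b_1)\ne(k_2,b_2)$, and I would do this by showing the standard-basis supports are disjoint. A basis vector $|\bm{x}\rangle$ occurs in $|u_b^k\rangle$ iff ${\rm wt}(\bm{x})+k\in A$ (for $b=1$) or ${\rm wt}(\bm{x})+k\in B$ (for $b=2$). If $k_1=k_2$ and $b_1\ne b_2$, a shared $|\bm{x}\rangle$ would put ${\rm wt}(\bm{x})+k_1$ in both $A$ and $B$, contradicting $A\cap B=\varnothing$. If $k_1\ne k_2$, a shared $|\bm{x}\rangle$ would give distinct elements $w_i={\rm wt}(\bm{x})+k_i\in A\cup B$ with $|w_1-w_2|=|k_1-k_2|\le t$, contradicting (D3). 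Hence the supports are disjoint and the inner product vanishes.

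I do not anticipate a real obstacle; the only thing that needs care is the bookkeeping that identifies (D2) with the equality of the two norms and (D3) with the cross-$k$ support-disjointness — that is, reading these conditions as the Knill--Laflamme-type requirements that make the construction go through. A minor point worth stating explicitly is that $l_k\ne0$ is exactly what makes the normalization $|u_b^k\rangle=l_k^{-1}|v_b^k\rangle$ legitimate (equivalently, that neither $|v_1^k\rangle$ nor $|v_2^k\rangle$ vanishes).
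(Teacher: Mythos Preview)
Your proof is correct and follows essentially the same approach as the paper: define the $A$-part and $B$-part of $|\Psi_k\rangle$, use (D2) to identify and normalize their common norm as $l_k$, and then argue orthogonality via disjoint standard-basis supports. Your case analysis for orthogonality is in fact slightly more explicit than the paper's, since you correctly separate the $k_1=k_2$, $b_1\neq b_2$ case (which uses $A\cap B=\varnothing$) from the $k_1\neq k_2$ case (which uses (D3)), whereas the paper attributes both to (D3) in a single line.
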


A set $\mathbb{P}\coloneqq\{P_k\}$ of projection matrices of order $2^N$ is called a projective measurement if and only if $\sum_kP_{k}=\mathbb{I}$, where $k$ is an index and $\mathbb{I}$ is the identity matrix of order $2^N$.
If the quantum state immediately before the measurement is $\rho\in S(\mathbb{C}^{2\otimes N})$ then the probability that outcome $k$ occurs is given by ${\rm Tr}(P_k\rho)$, and the quantum state $\rho'$ after the measurement is $\rho'\coloneqq\frac{P_k\rho P_k}{{\rm Tr}(P_k\rho)}$.

\begin{defi}[Set $\mathbb{P}_{A,B}$]\label{pab}
Let $A,B\subset\{0,1\}^N$ be sets. 
For an integer $0\leq k\leq t$, suppose that 
\begin{align*}
\displaystyle W_k\coloneqq\{\bm{x}\in\{0,1\}^{N-t}\mid {\rm wt}({\bm{x}})+k\in A\cup B\}.
\end{align*}
Then we define a set $\mathbb{P}_{A,B}\coloneqq\{P_0,P_1,\dots,P_t,P_\varnothing\}$ of projection matrices, where
\begin{align*}
P_k\coloneqq
\begin{cases}
\sum_{\bm{x}\in W_k}|\bm{x}\rangle\langle\bm{x}|&k\in\{0,1,\dots,t\},\medskip\\
\mathbb{I}-\sum_{k=0}^tP_k&k=\varnothing.
\end{cases}
\end{align*}
\end{defi}

If non-empty sets $A,B\subset\{0,1\}^N$ satisfy the condition (D3), the set $\mathbb{P}_{A,B}$ is clearly a projective measurement.
The following Lemma \ref{lem6} shows the results of the projective measurement $\mathbb{P}_{A,B}$ under the state after a deletion error in our code $Q_{A,B}^f$.

\begin{lem}\label{lem6}
Let $A,B\subset\{0,1\}^N$ be non-empty sets with $A\cap B=\varnothing$ and $f:A\cup B\rightarrow\mathbb{C}$ be a map satisfying the conditions (D1), (D2), and (D3).
Let $|\Psi\rangle$ and $|\Psi_k\rangle$ for an integer $0\leq k\leq t$ be defined by Equations (\ref{psi1}), (\ref{psik}) and (\ref{cons}).
If we perform the projective measurement $\mathbb{P}_{A,B}$ under the quantum state $D_P(|\Psi\rangle\langle\Psi|)\in S(\mathbb{C}^{2\otimes (N-t)})$ for any deletion position $P\subset [N]$, the probability $p(k)$ of getting outcome $k\in\{0,1,\dots,t,\varnothing\}$ is
\begin{align*}
p(k)=
\begin{cases}
\binom{\,t\,}{k}{l_k}^2&k\in\{0,1,\dots,t\},\smallskip\\
0&k=\varnothing.
\end{cases}
\end{align*}
When the outcome $k\in\{0,1,\dots,t\}$ is obtained,
 the quantum state $\rho(k)\in S(\mathbb{C}^{2\otimes (N-t)})$ after the measurement is 
\begin{align*}
\rho(k)=\frac{1}{{l_k}^2}|\Psi_k\rangle\langle\Psi_k|.
\end{align*}
\end{lem}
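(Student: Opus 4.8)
\medskip
\noindent\textit{Proof proposal.}
The plan is to feed the explicit description of the corrupted state, supplied by Lemma \ref{lem2} and normalized by Lemma \ref{lem4}, into the measurement rule, after first determining exactly how each projector $P_j$ of $\mathbb{P}_{A,B}$ acts on each $|\Psi_k\rangle$. I would start from $\rho\coloneqq D_P(|\Psi\rangle\langle\Psi|)=\sum_{k=0}^{t}\binom{t}{k}|\Psi_k\rangle\langle\Psi_k|$, which is Lemma \ref{lem2} applied to the state (\ref{psi1}) with $c$ given by (\ref{cons}). By (\ref{cons}), $c(w)\neq0$ exactly when $w\in A\cup B$, so the basis kets $|\bm{x}\rangle$ with $\bm{x}\in\{0,1\}^{N-t}$ that occur with nonzero coefficient in $|\Psi_k\rangle$ are precisely those with ${\rm wt}(\bm{x})+k\in A\cup B$, i.e.\ those with $\bm{x}\in W_k$. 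Consequently $P_k|\Psi_k\rangle=|\Psi_k\rangle$ for every $k\in\{0,\dots,t\}$.

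The one non-mechanical step is the claim that $P_j|\Psi_k\rangle=0$ whenever $j,k\in\{0,\dots,t\}$ and $j\neq k$. Writing $P_j|\Psi_k\rangle=\sum_{\bm{x}\in W_j}c({\rm wt}(\bm{x})+k)|\bm{x}\rangle$, a nonzero term would force both ${\rm wt}(\bm{x})+j$ and ${\rm wt}(\bm{x})+k$ to lie in $A\cup B$; these two values are distinct (because $j\neq k$) yet differ by $|j-k|\le t$, contradicting condition (D3). This is exactly where (D3) enters. Together with the previous observation this yields $P_j|\Psi_k\rangle=\delta_{j,k}|\Psi_k\rangle$ for all $j,k\in\{0,\dots,t\}$, and hence $P_\varnothing|\Psi_k\rangle=\bigl(\mathbb{I}-\sum_{j=0}^{t}P_j\bigr)|\Psi_k\rangle=0$.

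From here the remainder is assembly. For $j\in\{0,\dots,t\}$ one gets $P_j\rho P_j=\binom{t}{j}|\Psi_j\rangle\langle\Psi_j|$ and $p(j)={\rm Tr}(P_j\rho)=\binom{t}{j}\langle\Psi_j|\Psi_j\rangle$, while $P_\varnothing|\Psi_k\rangle=0$ gives $p(\varnothing)={\rm Tr}(P_\varnothing\rho)=0$. To evaluate $\langle\Psi_j|\Psi_j\rangle$ I would invoke Lemma \ref{lem4}: $|\Psi_j\rangle=l_j(\alpha|u_1^j\rangle+\beta|u_2^j\rangle)$ with $\{|u_1^j\rangle,|u_2^j\rangle\}$ orthonormal, so $\langle\Psi_j|\Psi_j\rangle=l_j^{2}(|\alpha|^2+|\beta|^2)=l_j^{2}$, using $|\alpha|^2+|\beta|^2=1$ (since $|\psi\rangle\langle\psi|\in S(\mathbb{C}^2)$) and $l_j\in\mathbb{R}$. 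Substituting gives $p(j)=\binom{t}{j}l_j^{2}$ and the post-measurement state $\rho(j)=P_j\rho P_j/{\rm Tr}(P_j\rho)=\frac{1}{l_j^{2}}|\Psi_j\rangle\langle\Psi_j|$, as claimed.

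I do not expect a genuine obstacle: the substantive work has been front-loaded into Lemmas \ref{lem2} and \ref{lem4}, and the only delicate point is the weight-bookkeeping argument above that uses (D3) to kill the off-diagonal action $P_j|\Psi_k\rangle$ for $j\neq k$; the normalization $\langle\Psi_j|\Psi_j\rangle=l_j^2$ then falls out of Lemma \ref{lem4} together with condition (D1) (equivalently, with $|\psi\rangle$ being a unit vector).
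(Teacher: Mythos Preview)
Your proposal is correct and follows essentially the same route as the paper: both combine Lemma~\ref{lem2} with Lemma~\ref{lem4} and the measurement rule, the crux being that $P_j|\Psi_k\rangle=\delta_{j,k}|\Psi_k\rangle$. You make this step explicit via the weight-bookkeeping argument using (D3), whereas the paper passes over it in one line; otherwise the arguments coincide.
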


\begin{defi}[Error-Correcting Operator $U_k$]\label{ope}
Suppose the assumptions of Lemma \ref{lem4} are satisfied.
Then, for integers $k\in\{0,1,\dots, t\}$ and $m\in\{1,2\}$, we can choose a unitary matrix $U_k$ whose $m$th row is $\langle u_{m}^k|$.
We call the matrix $U_k$ an error-correcting operator.
The $m$th row of $U_k$ for $3\leq m\leq 2^{N-t}$ is also denoted as $\langle u_{m}^k|$.
\end{defi}

\begin{defi}[Decoding Algorithm ${\rm Dec}_{A,B}^f$]\label{Deco}
Let $A,B\subset\{0,1\}^N$ be non-empty sets with $A\cap B=\varnothing$ and $f:A\cup B\rightarrow\mathbb{C}$ be a map satisfying the conditions (D1), (D2), and (D3).
Define a decoder ${\rm Dec}_{A,B}^f
$ as a map from $\rho\in S(\mathbb{C}^{2\otimes (N-t)})$ to $\sigma\in\mathbb{C}^{2}$ constructed by the following steps:

\begin{itemize}
  \setlength{\leftskip}{0.7cm}
\item[\textit{(Step 1)}] Perform the projective measurement $\mathbb{P}_{A,B}$ under the quantum state $\rho$.
Assume that the outcome is $k$ and that the state after the measurement is $\rho(k)$.
\item[\textit{(Step 2)}] Let $\tilde{\rho}\coloneqq U_k\rho(k)U_k^\dag$.
Here $U_k$ is the error-correcting operator.
\item[\textit{(Step 3)}] At last, return $\sigma\coloneqq\underbrace{{\rm Tr}_1\circ\dots\circ{\rm Tr}_1}_{(N-t-1) \textit{ times}}(\tilde{\rho})$.
\end{itemize}
\end{defi}

We now describe the proof of the main theorem.

\begin{proof}[Proof of Theorem \ref{main}]
Set $|\Psi\rangle:={\rm Enc}_{A,B}^f(|\psi\rangle)$ for a pure quantum state $|\psi\rangle\in\mathbb{C}^2$.
For an integer $k\in\{0,1,\dots ,t\}$ and integers $i,j\in [2^{N-t}]$, we denote the $(i,j)$ element of the matrix $U_k\left(\frac{1}{{l_k}^2}|\Psi_k\rangle\langle\Psi_k|\right)U_k^\dag$ by $u_{k}(i,j)$.
By Lemma \ref{lem4}, 
we have
\begin{align}
u_{k}(i,j)&=\langle u_i^k|\left(\frac{1}{{l_k}^2}|\Psi_k\rangle\langle\Psi_k|\right)|u_j^k\rangle\nonumber\\
&=\langle u_i^k|(\alpha|u_1^k\rangle+\beta|u_2^k\rangle)(\overline{\alpha}\langle u_1^k|+\overline{\beta}\langle u_2^k|)|u_j^k\rangle\nonumber\\
&=(\alpha\langle u_i^k|u_1^k\rangle+\beta\langle u_i^k|u_2^k\rangle)(\overline{\alpha}\langle u_1^k|u_j^k\rangle+\overline{\beta}\langle u_2^k|u_j^k\rangle)\nonumber\\
&=
\begin{cases}
|\alpha|^2&(i,j)=(1,1),\\
\alpha\overline{\beta}&(i,j)=(1,2),\\
\overline{\alpha}\beta&(i,j)=(2,1),\\
|\beta|^2&(i,j)=(2,2),\\
0&{\rm otherwise}.
\end{cases}
\label{sigma}
\end{align}

By Lemmas \ref{lem2}, \ref{lem6}, Definition \ref{Deco}, and Equation (\ref{sigma}),
\begin{align*}
&{\rm Dec}_{A,B}^f\circ D_P\circ{\rm Enc}_{A,B}^f(|\psi\rangle\langle\psi|)\\
&~~~~={\rm Dec}_{A,B}^f\circ D_P(|\Psi\rangle\langle\Psi|)\\
&~~~~={\rm Dec}_{A,B}^f\left(\sum_{k=0}^t{\binom{\,t\,}{k}}|\Psi_k\rangle\langle\Psi_k|\right)\\
&~~~~={\rm Tr}_1\circ\dots\circ{\rm Tr}_1\left(U_k\left(\frac{1}{{l_k}^2}|\Psi_k\rangle\langle\Psi_k|\right)U_k^\dag\right)\\
&~~~~={\rm Tr}_1\circ\dots\circ{\rm Tr}_1(|0\rangle\langle0|\otimes\dots\otimes|0\rangle\langle0|\otimes|\psi\rangle\langle\psi|)\\
&~~~~=|\psi\rangle\langle\psi|
\end{align*}
holds for any pure quantum state $|\psi\rangle\in \mathbb{C}^2$ and any deletion position $P\in [N]$.
This is exactly the original quantum state.
\end{proof}

%%%%%%%%%%%%%%%%%%%%%%%%%%%%%%%%%%%%%%%%%%%%%%%%%%%%%%%%%%%%%%%%%%%%
%%%%%%%%%%%%%%%%%%%%%%%%%%%%%%%%%%%%%%%%%%%%%%%%%%%%%%%%%%%%%%%%%%%%
%%%%%%%%%%%%%%%%%%%%%%%%%%%%%%%%%%%%%%%%%%%%%%%%%%%%%%%%%%%%%%%%%%%%

\section{Examples}\label{example}

By Theorem \ref{main}, we can construct a permutation-invariant quantum code for deletion errors from finding two non-empty sets $A,B\subset\{0,1,\dots,N\}$ with $A\cap B=\varnothing$ and a map $f:A\cup B\rightarrow\mathbb{C}$ that satisfy the three conditions (D1), (D2), and (D3).
we give two families of our codes in this section.

\subsection{Example 1 (Multiple-deletion error-correcting codes)}

First, we introduce a key combinatorial equation in giving the first example.

\begin{lem}\label{comb}
Let $n$ be a positive integer.
Then for all integers $0\leq t\leq n-1$,
\begin{align*}
\sum_{l=0}^n{\binom{n}{l}}l^t(-1)^l=0.
\end{align*}
\end{lem}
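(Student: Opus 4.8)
The plan is to prove Lemma~\ref{comb} by recognizing the sum $\sum_{l=0}^n \binom{n}{l} l^t (-1)^l$ as (up to sign) the result of applying a finite-difference / derivative operator to the polynomial $(1-x)^n$ and evaluating at $x=1$. Concretely, start from the binomial identity $(1+y)^n = \sum_{l=0}^n \binom{n}{l} y^l$, substitute $y = -x$, and then apply the operator $\left(x\frac{d}{dx}\right)^t$ to both sides: on the right-hand side this brings down a factor $l^t$ in front of each term $\binom{n}{l}(-x)^l$, so setting $x=1$ reproduces exactly $\sum_{l=0}^n \binom{n}{l} l^t (-1)^l$. On the left-hand side, $\left(x\frac{d}{dx}\right)^t (1-x)^n$ is, after expanding, a sum of terms each of which contains a factor $(1-x)^{n-j}$ with $1 \le j \le t \le n-1$; hence $n-j \ge 1$ and every term vanishes at $x=1$. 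This yields the claimed identity.

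First I would make the operator bookkeeping precise. Let $\theta \coloneqq x\frac{d}{dx}$. A clean way to control $\theta^t (1-x)^n$ is to prove by induction on $t$ the statement that $\theta^t (1-x)^n = \sum_{j=1}^{t} c_{t,j}(x)\,(1-x)^{n-j}$ for some polynomials $c_{t,j}(x)$ (with the base case $t=1$ giving $\theta(1-x)^n = -nx(1-x)^{n-1}$). The inductive step is immediate: applying $\theta$ to a single term $c_{t,j}(x)(1-x)^{n-j}$ produces $x c_{t,j}'(x)(1-x)^{n-j} - (n-j) x c_{t,j}(x)(1-x)^{n-j-1}$, both of which have the required shape, and the index $j$ only increases to at most $t+1$. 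Since we only need $t \le n-1$, the exponents $n-j$ appearing range over $n-1, n-2, \dots, n-t \ge 1$, all strictly positive, so the whole expression is divisible by $(1-x)$ and vanishes at $x=1$.

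Alternatively — and perhaps more elegantly — I could avoid calculus entirely and argue purely combinatorially via finite differences: $\sum_{l=0}^n \binom{n}{l}(-1)^l P(l) = (-1)^n (\Delta^n P)(0)$ for any polynomial $P$, where $\Delta$ is the forward-difference operator, and $\Delta^n$ annihilates every polynomial of degree $< n$; applying this with $P(l) = l^t$ and $t \le n-1 < n$ gives zero. Either route works; I would likely present the operator-at-$x=1$ version since it is self-contained and short, relegating the finite-difference remark to a one-line comment.

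I do not anticipate a genuine obstacle here — the lemma is a standard identity. The only point requiring a little care is the claim that $\theta^t(1-x)^n$ is a $(1-x)$-divisible expression when $t \le n-1$; this is exactly what the induction above handles, and the hypothesis $t \le n-1$ is used precisely to keep all exponents $n-j$ at least $1$. (If $t = n$, the identity fails, as $\sum_l \binom{n}{l}(-1)^l l^n = (-1)^n n!$, which matches $\theta^n(1-x)^n$ picking up the nonvanishing term $c_{n,n}(x)(1-x)^0$ at $x=1$.) So the main thing to get right is the precise form of the inductive invariant and the range of exponents, after which the conclusion is a one-line evaluation at $x=1$.
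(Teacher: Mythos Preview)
Your proposal is correct; both the differential-operator argument and the finite-difference remark are valid and standard. The only cosmetic point is that your induction on $t$ is stated for $t\ge 1$, so the case $t=0$ is not literally covered by the invariant $\theta^t(1-x)^n=\sum_{j=1}^t c_{t,j}(x)(1-x)^{n-j}$; but of course $t=0$ is just $(1-x)^n|_{x=1}=0$, so this is harmless.

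The paper takes a different route: it argues by induction on $t$ using the purely combinatorial identity $\sum_{l=0}^n\binom{n}{l}\binom{l}{t}(-1)^l=0$ for $0\le t<n$ (which follows from $\binom{n}{l}\binom{l}{t}=\binom{n}{t}\binom{n-t}{l-t}$ and the binomial theorem). Since $l^t$ is an integer linear combination of $\binom{l}{0},\dots,\binom{l}{t}$, the induction reduces $\sum_l\binom{n}{l}(-1)^l l^t$ to a combination of already-vanishing sums. Your finite-difference alternative is really the same idea in disguise---both amount to ``$n$-th difference kills polynomials of degree $<n$''---whereas your primary operator-at-$x=1$ argument is a genuinely analytic variant. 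The paper's approach avoids calculus entirely and fits the combinatorial flavor of the surrounding section; your $\theta$-argument is slightly longer to set up but is fully self-contained and makes the role of the hypothesis $t\le n-1$ (keeping exponents $n-j\ge 1$) very transparent.
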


Lemma \ref{comb} can be easily shown by induction using the binomial identity $\sum_{l=0}^n\binom{n}{l}\binom{l}{t}(-1)^l=0$, which holds for any integer $0\leq t<n$.

The following Theorem \ref{ouy} gives quantum codes for deletion errors that have never been known before.
This is an interesting example that can be proved by good use of the combinatorial equation above.
Here, we fix an integer $1\leq t<N$.

\begin{thm}\label{ouy}
Let $g,n$ be integers and $u$ be a rational number with $g\geq t+1$, $n\geq t+1$, $u\coloneqq\frac{N}{gn}\geq1$.
Suppose that sets $A,B\subset\{0,1,\dots,N\}$ and a map $f:A\cup B\rightarrow\mathbb{C}$ are set as 
\begin{align*}
A&\coloneqq\{gl\mid0\leq l\leq n,l:{\rm even}\},\\
B&\coloneqq\{gl\mid0\leq l\leq n,l:{\rm odd}\},\\
f(gl)&\coloneqq\sqrt{\frac{\binom{n}{l}}{2^{n-1}\binom{gnu}{gl}}}.
\end{align*}
Then, the code $Q_{A,B}^f$ is an $[N,1]$ $t$-deletion error-correcting code.
\end{thm}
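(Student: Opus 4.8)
The plan is to check that the data $(A,B,f)$ of the statement satisfies conditions (D1), (D2), and (D3) of Definition \ref{del}, and then apply Theorem \ref{main}. First I would record the elementary facts. Since $u\geq 1$ we have $gn\leq gnu=N$, so $A,B\subseteq\{0,1,\dots,N\}$; the map $l\mapsto gl$ is injective because $g\geq 1$, so $A\cap B=\varnothing$ (it separates even and odd $l$) and both sets are non-empty ($0\in A$, and since $n\geq t+1\geq 2$ also $g\in B$); and $f$ is well-defined and real-valued because $\binom{N}{gl}>0$ whenever $0\leq gl\leq N$. Condition (D1) is then immediate: $\sum_{w\in A}|f(w)|^2\binom{N}{w}=\frac{1}{2^{n-1}}\sum_{0\leq l\leq n,\ l\text{ even}}\binom{n}{l}=1$, and likewise with $B$ and odd $l$. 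Condition (D3) is also immediate: distinct elements of $A\cup B$ have the form $gl_1\neq gl_2$ with $l_1,l_2\in\{0,\dots,n\}$, so their difference has absolute value $g|l_1-l_2|\geq g\geq t+1>t$.

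The substance is condition (D2). Rewriting both sides as sums over $l$, equality of the two sides is equivalent to
\begin{align*}
\sum_{l=0}^{n}(-1)^l\binom{n}{l}\,\frac{\binom{N-t}{gl-k}}{\binom{N}{gl}}=0
\end{align*}
for each $0\leq k\leq t$. Here I would use the elementary double-counting identity $\binom{N}{gl}\binom{gl}{k}\binom{N-gl}{t-k}=\binom{N}{t}\binom{t}{k}\binom{N-t}{gl-k}$, valid for every integer $l$ under the out-of-range convention for binomials: dividing through by $\binom{N}{gl}\binom{N}{t}\binom{t}{k}\neq 0$ turns the summand into a fixed constant times $\binom{gl}{k}\binom{N-gl}{t-k}$. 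Since $0\leq N-gl$ for $0\leq l\leq n$, both factors coincide with their falling-factorial polynomial expressions, so $l\mapsto\binom{gl}{k}\binom{N-gl}{t-k}$ is a polynomial in $l$ of degree $k+(t-k)=t$. Because $n\geq t+1$, i.e. $t\leq n-1$, Lemma \ref{comb} applied monomial by monomial yields $\sum_{l=0}^n(-1)^l\binom{n}{l}\binom{gl}{k}\binom{N-gl}{t-k}=0$, which is exactly what is needed.

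Finally I would check the nonvanishing clause of (D2). In the sum over odd $l$ every term is nonnegative, and the $l=1$ term equals $\frac{n}{2^{n-1}\binom{N}{g}}\binom{N-t}{g-k}$, which is strictly positive: $1\leq g-k$ because $k\leq t\leq g-1$, and $g-k\leq g<N-t$ because $N-t\geq gn-t\geq 2g-t\geq g+1$. Hence the $B$-sum is positive, and by the equality just proved so is the $A$-sum; thus neither vanishes. With (D1), (D2), and (D3) verified, Theorem \ref{main} shows that $Q_{A,B}^f$ is an $[N,1]$ $t$-deletion error-correcting code. The main obstacle is the (D2) computation: identifying the combinatorial identity that collapses the alternating sum into the shape that Lemma \ref{comb} annihilates, and keeping the polynomial-degree bookkeeping consistent with the boundary cases $gl-k<0$ and $N-gl=0$, where one must use that $\binom{gl}{k}$, read as a polynomial, already vanishes on the relevant integers.
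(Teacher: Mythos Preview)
Your proposal is correct and follows essentially the same route as the paper: verify (D1), (D3) directly, reduce (D2) to Lemma~\ref{comb} by showing that $\binom{N-t}{gl-k}/\binom{N}{gl}$ is a polynomial in $l$ of degree~$t$, and then invoke Theorem~\ref{main}. Your treatment is in fact more thorough than the paper's---the double-counting identity $\binom{N}{gl}\binom{gl}{k}\binom{N-gl}{t-k}=\binom{N}{t}\binom{t}{k}\binom{N-t}{gl-k}$ makes the degree-$t$ polynomial claim explicit rather than asserted, and you supply an argument for the nonvanishing clause of (D2) which the paper dismisses as ``obvious.''
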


\begin{proof}
It is clear that $A\neq\varnothing$, $B\neq\varnothing$, $A\cap B=\varnothing$.
Hence, it is enough to prove the three conditions (D1), (D2), and (D3) hold by Theorem \ref{main}.

A simple calculation shows that 
\begin{align*}
\sum_{w\in A}|f(w)|^2{\binom{N}{w}}
=\frac{1}{2^{n-1}}\sum_{\substack{0\leq l\leq n\\l{\rm ~even}}}{\binom{n}{l}}%\\
=1.
\end{align*}
Similarly, $\sum_{w\in B}|f(w)|^2\binom{N}{w}=1$. Therefore (D1) holds.

For an integer $0\leq k\leq t$, we obtain
\begin{align*}
& \sum_{w\in A}|f(w)|^2{\binom{N-t}{w-k}}-\sum_{w\in B}|f(w)|^2{\binom{N-t}{w-k}}\\
&~~~~=\sum_{l=0}^n\frac{\binom{n}{l}}{2^{n-1}\binom{gnu}{gl}}{\binom{gnu-t}{gl-k}}(-1)^l\\
&~~~~=0
\end{align*}
by the assumption $n\geq t+1$ and Lemma \ref{comb}.
Note that the ratio of binomial coefficients $\binom{gnu-t}{gl-k}/\binom{gnu}{gl}$ is a polynomial in $l$ of order $t$.
On the other hand, it is obvious that $\sum_{w\in A}|f(w)|^2\binom{N-t}{w-k}\neq0$. 
Therefore, (D2) holds.

It is clear that (D3) holds by the assumption $g\geq t+1$.
\end{proof}

The quantum code constructed by Theorem \ref{ouy} include some that are already known, but its tolerance to deletion errors was mentioned here for the first time.
This code is called a $(g,n,u)$-PI code in the terms of Ouyang\cite{Ouyang2014}.
Theorem \ref{ouy} claims that a $(g,n,u)$-PI code is a $t$-deletion error-correcting code if $g\geq t+1$, $n\geq t+1$, and $u\geq1$.
The smallest example is precisely Hagiwara's 4-qubit code that is a $(2,2,1)$-PI code\cite{Hagiwara20202}.
The following Fact \ref{Ouy2014} is a remarkable result about PI codes by Ouyang \cite{Ouyang2014}.

\begin{fact}
\label{Ouy2014}
For an integer $t\geq 1$, $(g,n,u)$-PI codes correct arbitrary $t$-qubit errors if $g=n=2t+1$ and $u\geq1$.
\end{fact}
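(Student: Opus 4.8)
The plan is to verify the Knill--Laflamme conditions~\cite{knill1997} for the code $Q_{A,B}^f$ against all Pauli errors of weight at most $t$, mirroring the proof of Theorem~\ref{ouy} but now exploiting the permutation symmetry in a second place. Write $|0_L\rangle\coloneqq{\rm Enc}_{A,B}^f(|0\rangle)$ and $|1_L\rangle\coloneqq{\rm Enc}_{A,B}^f(|1\rangle)$; by Equations~(\ref{psi2}) and~(\ref{cons}) these are superpositions of the normalized Dicke states $|D_{gl}\rangle\coloneqq\binom{N}{gl}^{-1/2}\sum_{{\rm wt}(\bm{x})=gl}|\bm{x}\rangle$ over the even (respectively odd) values of $l\in\{0,\dots,n\}$, with $|D_{gl}\rangle$-amplitude $\sqrt{\binom{n}{l}/2^{n-1}}$. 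Since any error acting on at most $t$ qubits lies in the linear span of the weight-$\le t$ Pauli operators, it suffices to show that for every Pauli operator $\sigma$ of weight at most $2t$ (which arises, up to a phase, as a product $E_a^\dag E_b$ of two such Paulis) one has $\langle i_L|\sigma|j_L\rangle=\delta_{i,j}\,c_\sigma$ with $c_\sigma$ a scalar independent of $i,j\in\{0,1\}$.

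\emph{Symmetrization.} Because $|i_L\rangle$ and $|j_L\rangle$ are permutation-invariant, I would replace $\sigma$ by its average $\bar{\sigma}\coloneqq\frac{1}{N!}\sum_\pi\pi\sigma\pi^{-1}$ over all qubit permutations, since $\langle i_L|\sigma|j_L\rangle=\langle i_L|\bar{\sigma}|j_L\rangle$. The operator $\bar{\sigma}$ depends only on how many qubits $\sigma$ acts on by $X$, by $Y$, and by $Z$; being itself permutation-invariant it preserves the symmetric subspace spanned by the Dicke states, so it is described by its entries $\langle D_{w'}|\bar{\sigma}|D_w\rangle$. As $\sigma$ flips exactly the qubits on which it is $X$ or $Y$, and there are at most ${\rm wt}(\sigma)\le 2t$ of them, $\langle D_{w'}|\bar{\sigma}|D_w\rangle=0$ whenever $|w-w'|>2t$. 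Since $g=2t+1>2t$, a weight-$\le 2t$ Pauli cannot connect two distinct multiples of $g$, hence $\langle D_{gl'}|\bar{\sigma}|D_{gl}\rangle=0$ for $l\neq l'$.

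\emph{Off-diagonal conditions.} The supports of $|0_L\rangle$ and $|1_L\rangle$ in the Dicke basis are indexed, respectively, by the even and the odd $l\in\{0,\dots,n\}$, which are disjoint; together with the previous paragraph this forces $\langle 0_L|\sigma|1_L\rangle=0$ (and, by conjugation, $\langle 1_L|\sigma|0_L\rangle=0$) for every Pauli $\sigma$ of weight $\le 2t$ --- this is exactly where the gap $g=2t+1$ is used, playing the role of condition (D3). \emph{Diagonal conditions.} It then remains to prove $\langle 0_L|\sigma|0_L\rangle=\langle 1_L|\sigma|1_L\rangle$, and by the two preceding paragraphs
\begin{align*}
\langle 0_L|\sigma|0_L\rangle-\langle 1_L|\sigma|1_L\rangle=\frac{1}{2^{n-1}}\sum_{l=0}^{n}(-1)^l\binom{n}{l}\,\langle D_{gl}|\bar{\sigma}|D_{gl}\rangle .
\end{align*}
The decisive step will be to show that $l\mapsto\langle D_{gl}|\bar{\sigma}|D_{gl}\rangle$ is the restriction of a polynomial in $w=gl$ of degree at most ${\rm wt}(\sigma)\le 2t$. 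Granting this, since $n=2t+1$ the right-hand side vanishes by Lemma~\ref{comb} (extended from the monomials $l^j$, $j\le n-1$, to arbitrary polynomials of degree $\le n-1$ by linearity), and $c_\sigma\coloneqq\langle 0_L|\sigma|0_L\rangle=\langle 1_L|\sigma|1_L\rangle$ is the required common value. The hypothesis $u\ge1$ enters only to guarantee $N=gnu\ge gn$, so that all the Dicke states $|D_{gl}\rangle$, $0\le l\le n$, are well-defined and $N$ is large enough for the binomial ratios below to be genuine polynomials on the relevant range of $w$.

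\emph{The main obstacle.} The work is concentrated in the polynomiality claim. I would obtain it from an explicit formula for a diagonal entry of an averaged Pauli in the Dicke basis: placing the $X$-, $Y$-, and $Z$-supports of $\sigma$ uniformly at random among the $N$ qubits and counting the weight-$w$ computational basis states that are fixed up to a flip on the $X$/$Y$-support, weighted by the sign and phase contributed by the $Z$- and $Y$-parts, leads to a finite sum of terms of the shape $\binom{s}{j}\binom{N-s}{w-j}\big/\binom{N}{w}$ with $s\le 2t$, each a polynomial in $w$ of degree $s$ (the same phenomenon already noted for $\binom{N-t}{w-k}/\binom{N}{w}$ in the proof of Theorem~\ref{ouy}); the phases from a $Y$-component require extra care in tracking cancellations but do not raise the degree. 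Everything else is routine bookkeeping with the permutation average and the disjoint-parity structure of $A$ and $B$.
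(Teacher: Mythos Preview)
The paper does not give a proof of Fact~\ref{Ouy2014}; it is stated without proof and attributed to Ouyang~\cite{Ouyang2014}. There is therefore nothing in the paper to compare your argument against.

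That said, your outline is essentially the argument in~\cite{Ouyang2014}: reduce Knill--Laflamme to symmetrized Paulis, use the gap $g=2t+1$ to rule out transitions between distinct Dicke levels $gl\neq gl'$ (hence the off-diagonal conditions), and for the diagonal conditions show that $l\mapsto\langle D_{gl}|\bar{\sigma}|D_{gl}\rangle$ is a polynomial of degree at most ${\rm wt}(\sigma)\le 2t<n$, so that $\sum_l(-1)^l\binom{n}{l}\langle D_{gl}|\bar{\sigma}|D_{gl}\rangle=0$ by Lemma~\ref{comb}. The one place your sketch is thin is the polynomiality claim itself: for a Pauli with $a$ $X$'s, $b$ $Y$'s, and $c$ $Z$'s, the diagonal element $\langle D_w|\sigma|D_w\rangle$ vanishes unless $a+b$ is even, and in the nonvanishing case one must track the phase $(-1)$ from each $Z$ hitting a $1$ and the phase $i^{\pm1}$ from each $Y$ (depending on whether it flips $0\to1$ or $1\to0$); the resulting combinatorial sum is indeed a polynomial in $w$ of degree $a+b+c$, but this takes a short explicit computation rather than a one-line remark. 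Apart from filling in that calculation, your plan is correct and matches the original proof.
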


Fact \ref{Ouy2014} and Theorem \ref{ouy} show that a $(2t+1,2t+1,u)$-PI code with an integer $t\geq1$ is $t$-qubit and $2t$-deletion error-correcting code.
This is the first example of codes that can correct both deletion errors and another type of errors.
The smallest example is precisely Ruskai's 9-qubit code that is a $(3,3,1)$-PI code\cite{ruskai2000}.
It was already known that this code can correct $1$-qubit errors, but it was shown here for the first time that it can  also correct $2$-deletion errors.

\subsection{Example 2 (Single-deletion error-correcting codes\cite{Shibayama2020})}

Here, we fix $t:=1$ and introduce $1$-deletion error-correcting codes.
The codes constructed by following Theorem \ref{shiba} are already known as examples\cite{Shibayama2020} of the code construction given by Nakayama and Hagiwara\cite{Nakayama20202}, but it is also one family of our codes.
\begin{thm}\label{shiba}
Suppose that two non-empty sets $A,B\subset\{0,1,\dots,N\}$ with $A\cap B=\varnothing$ satisfy followings:
\begin{itemize}
\item $w\in A\Rightarrow N-w\in A$,\smallskip
\item $w\in B\Rightarrow N-w\in B$,\smallskip
\item For any integers $w_1,w_2\in A\cup B$,
\begin{align*}
w_1\neq w_2~\Longrightarrow ~|w_1-w_2|>1.
\end{align*}
\end{itemize}
and a map $f:A\cup B\rightarrow\mathbb{C}$ are set as
\begin{align*}
f(w):=
\begin{cases}
\sqrt{\frac{1}{\sum_{w'\in A}\binom{N}{w'}}}&w\in A,\medskip\\
\sqrt{\frac{1}{\sum_{w'\in B}\binom{N}{w'}}}&w\in B.
\end{cases}
\end{align*}
Then, the code $Q_{A,B}^f$ is an $[N,1]$ $1$-deletion error-correcting code.
\end{thm}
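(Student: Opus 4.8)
The plan is to invoke Theorem~\ref{main}: since $A$ and $B$ are non-empty, disjoint subsets of $\{0,1,\dots,N\}$, it suffices to check that the prescribed $f$ satisfies the three conditions (D1), (D2), and (D3) with $t=1$. Condition (D3) is literally the third bulleted hypothesis, so nothing is required there. Condition (D1) is immediate from the definition of $f$: on $A$ the function $|f|^2$ is the constant $1/\sum_{w'\in A}\binom{N}{w'}$, hence $\sum_{w\in A}|f(w)|^2\binom{N}{w}=1$, and the same computation on $B$ gives the other equality.

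The substance of the proof is condition (D2), which for $t=1$ asks that for $k\in\{0,1\}$ the quantities $\sum_{w\in A}|f(w)|^2\binom{N-1}{w-k}$ and $\sum_{w\in B}|f(w)|^2\binom{N-1}{w-k}$ coincide and are non-zero. I would prove the sharper statement that each of these four sums equals $1/2$. First, Pascal's rule $\binom{N-1}{w}+\binom{N-1}{w-1}=\binom{N}{w}$ together with (D1) gives
\begin{align*}
\sum_{w\in A}|f(w)|^2\binom{N-1}{w}+\sum_{w\in A}|f(w)|^2\binom{N-1}{w-1}=\sum_{w\in A}|f(w)|^2\binom{N}{w}=1.
\end{align*}
Second, I would exploit the symmetry hypothesis $w\in A\Rightarrow N-w\in A$: reindexing the sum by $w\mapsto N-w$ and using that $|f|^2$ is constant on $A$ together with $\binom{N-1}{N-w-1}=\binom{N-1}{w}$, one obtains
\begin{align*}
\sum_{w\in A}|f(w)|^2\binom{N-1}{w-1}=\sum_{w\in A}|f(w)|^2\binom{N-1}{w}.
\end{align*}
Combining the two displays forces both sums to equal $1/2$; the identical argument on $B$ handles the $B$-sums. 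Hence for each $k\in\{0,1\}$ the $A$-sum and the $B$-sum appearing in (D2) are both equal to $1/2$, so they agree and are non-zero, and (D2) holds. With (D1), (D2), (D3) verified, Theorem~\ref{main} yields that $Q_{A,B}^f$ is an $[N,1]$ $1$-deletion error-correcting code.

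I expect the only non-routine point to be noticing that the ``palindrome'' conditions on $A$ and $B$ are exactly what is needed to split the Pascal identity symmetrically into its two halves; once that observation is made, everything reduces to bookkeeping with binomial coefficients. A minor thing to watch is the boundary behaviour $\binom{N-1}{w-1}=0$ at $w=0$ and $\binom{N-1}{w}=0$ at $w=N$, but these terms are handled automatically by the stated convention $\binom{N}{w}=0$ for $w<0$ or $N<w$, and the symmetry $w\in A\Rightarrow N-w\in A$ keeps the reindexing inside $\{0,1,\dots,N\}$.
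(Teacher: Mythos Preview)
Your proposal is correct and follows essentially the same route as the paper's own proof: both reduce to Theorem~\ref{main}, dispose of (D1) and (D3) immediately, and establish (D2) by combining Pascal's identity $\binom{N-1}{w}+\binom{N-1}{w-1}=\binom{N}{w}$ with the symmetry $w\mapsto N-w$ of $A$ (resp.\ $B$) to conclude that each of the four sums equals $\tfrac12$. Your write-up is in fact slightly more explicit than the paper's, spelling out the reindexing $\binom{N-1}{(N-w)-1}=\binom{N-1}{w}$ and the role of $|f|^2$ being constant on $A$ and on $B$, whereas the paper simply asserts the equality $\sum_{w\in A}\binom{N-1}{w}=\sum_{w\in A}\binom{N-1}{w-1}$ ``by the assumption.''
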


\begin{proof}
It is clear that (D1) and (D3) hold by the assumptions.
Hence we show that (D2) holds. 
By the assumption, we have
\begin{gather*}
\sum_{w\in A}{\binom{N-1}{w-0}}=\sum_{w\in A}{\binom{N-1}{w-1}},\\
\sum_{w\in A}{\binom{N-1}{w-0}}+\sum_{w\in A}{\binom{N-1}{w-1}}=\sum_{w\in A}{\binom{N}{w}}.
\end{gather*}
Similarly, the same equations for $B$ are obtained.
Hence, 
\begin{align*}
&\sum_{w\in A}|f(w)|^2{\binom{N-t}{w-k}}-\sum_{w\in B}|f(w)|^2{\binom{N-t}{w-k}}\\
&~~~~~~
=\frac{\sum_{w\in A}\binom{N-1}{w-k}}{\sum_{w'\in A}\binom{N}{w'}}-\frac{\sum_{w\in B}\binom{N-1}{w-k}}{\sum_{w'\in B}\binom{N}{w'}}\\
&~~~~~~=\frac1{\,2\,}-\frac1{\,2\,}\\
&~~~~~~=0
\end{align*}
holds for any integer $0\leq k\leq 1$.
Therefore, (D2) holds.
\end{proof}

%%%%%%%%%%%%%%%%%%%%%%%%%%%%%%%%%%%%%%%%%%%%%%%%%%%%%%%%%%%%%%%%%%%%
%%%%%%%%%%%%%%%%%%%%%%%%%%%%%%%%%%%%%%%%%%%%%%%%%%%%%%%%%%%%%%%%%%%%
%%%%%%%%%%%%%%%%%%%%%%%%%%%%%%%%%%%%%%%%%%%%%%%%%%%%%%%%%%%%%%%%%%%%

\section{Generalization}\label{generalize}

In this section, we discuss constructions of $[N,K]$ $t$-deletion error-correcting codes for any positive integer $K$.
Let $L$ be a positive integer and $A_0,A_1,\dots,$ $A_{L-1}\subset\{0,1,\dots,N\}$ be mutually disjoint non-empty sets and $f:\bigcup_{i=0}^{L-1}A_i\rightarrow\mathbb{C}$ be a map which satisfy the following three conditions:
\begin{itemize}
\setlength{\leftskip}{0.4cm}
\item[(D1)*] For any integer $i\in\{0,1,\dots,L-1\}$,
\begin{align*}
\sum_{w\in A_i}|f(w)|^2{\binom{N}{w}}=1.
\end{align*}
\item[(D2)*] For any integers $0\leq k\leq t$ and $i,j\in\{0,1,\dots,L-1\}$,
\begin{align*}
\sum_{w\in A_i}|f(w)|^2{\binom{N-t}{w-k}}=\sum_{w\in A_j}|f(w)|^2{\binom{N-t}{w-k}}\neq0.
\end{align*}
\item[(D3)*] For any integers $w_1,w_2\in \bigcup_{i=0}^{L-1}A_i$,
\begin{align*}
w_1\neq w_2~\Longrightarrow ~|w_1-w_2|>t.
\end{align*}
\end{itemize}
Let us define an encoder as a linear map ${\rm Enc}_{\{A_i\}}^f:\mathbb{C}^L\rightarrow \mathbb{C}^{2\otimes N}$.
For a quantum state $|\psi\rangle=\sum_{i=0}^{L-1}\textstyle\alpha_i|i\rangle\in\mathbb{C}^L$, where $|0\rangle,|1\rangle,\dots,|L-1\rangle$ is the standard orthogonal basis of $\mathbb{C}^L$, ${\rm Enc}_{\{A_i\}}^f$ maps the state $|\psi\rangle$ to the following state $|\Psi\rangle$,
\begin{align*}
|\Psi\rangle&\coloneqq\sum_{i=0}^{L-1}\textstyle\left(\sum_{\substack{\bm{x}\in\{0,1\}^{N}\\{\rm wt}(\bm{x})\in A_i}}\alpha_if({\rm wt}(\bm{x}))|\bm{x}\rangle\right).
\end{align*}
Note that this encoder is an extension of Definition \ref{Enc}.
We claim that the image of ${\rm Enc}_{\{A_i\}}^f$ is a $t$-deletion error-correcting code for an integer $1\leq t<N$.
We can use the same method as in Section \ref{codeconst} for the proof.
For the case $L=2^K$, we obtain a $[N,K]$ $t$-deletion error-correcting code.

Although we do not discuss it in detail here, we can construct $[N,K]$ single-deletion error-correcting codes with any integer $K\geq1$ by extending Theorem \ref{shiba}.
But no example of $[N,K]$ $t$-deletion error-correcting codes with $K\geq2$ and $t\geq2$ has been found to date.

%%%%%%%%%%%%%%%%%%%%%%%%%%%%%%%%%%%%%%%%%%%%%%%%%%%%%%%%%%%%%%%%%%%%
%%%%%%%%%%%%%%%%%%%%%%%%%%%%%%%%%%%%%%%%%%%%%%%%%%%%%%%%%%%%%%%%%%%%
%%%%%%%%%%%%%%%%%%%%%%%%%%%%%%%%%%%%%%%%%%%%%%%%%%%%%%%%%%%%%%%%%%%%

\section{Conclusion}\label{conc}
This paper gave a construction of permutation-invariant quantum codes for deletion errors.
In particular, the codes given in Theorem \ref{ouy} contain the first example of quantum codes that can correct two or more deletion errors and the first example of codes that can correct both multiple-qubit errors and multiple-deletion errors.

\section*{Acknowledgment}
The research has been partly executed in response to support by KAKENHI 18H01435.

%%%%%%%%%%%%%%%%%%%%%%%%%%%%%%%%%%%%%%%%%%%%%%%%%%%%%%%%%%%%%%%%%%%%
%%%%%%%%%%%%%%%%%%%%%%%%%%%%%%%%%%%%%%%%%%%%%%%%%%%%%%%%%%%%%%%%%%%%
%%%%%%%%%%%%%%%%%%%%%%%%%%%%%%%%%%%%%%%%%%%%%%%%%%%%%%%%%%%%%%%%%%%%

 \appendix
\def\thesection{Appendix~\Alph{section}}

  \section{Proof of Lemma \ref{lem2}}\label{prf2}
\begin{proof}
By Equations (\ref{psi1}) and (\ref{psik}), it is clear that 
\begin{align*}
|\Psi\rangle=\sum_{\bm{y}\in\{0,1\}^t}\left(|\bm{y}\rangle\otimes|\Psi_{{\rm wt}(\bm{y})}\rangle\right)
\end{align*}
for any integer $1\leq t<N$. By the permutation-invariance of $|\Psi\rangle$ and the definition of the partial trace,
\begin{align*}
D_P(|\Psi\rangle\langle\Psi|)&=\underbrace{{\rm Tr}_1\circ\dots\circ{\rm Tr}_1}_{t\textrm{ times}}(|\Psi\rangle\langle\Psi|)\\
&=\sum_{\bm{y}\in\{0,1\}^t}|\Psi_{{\rm wt}(\bm{y})}\rangle\langle\Psi_{{\rm wt}(\bm{y})}|\\
&=\sum_{k=0}^t{\binom{\,t\,}{k}}|\Psi_k\rangle\langle\Psi_k|
\end{align*}
holds for any deletion position $P\subset [N]$.
\end{proof}

  \section{Proof of Lemma \ref{lem4}}\label{prf4}
\begin{proof}
For an integer $0\leq k\leq t$, suppose that
\begin{align*}
|U_{1}^{k}\rangle&\coloneqq\sum_{w\in A}
\left(\textstyle
\sum_{\substack{\bm{x}\in\{0,1\}^{N-t}\\{\rm wt}(\bm{x})=w-k}}
f(w)|\bm{x}\rangle\right),\\
|U_{2}^{k}\rangle&\coloneqq\sum_{w\in B}
\left(\textstyle
\sum_{\substack{\bm{y}\in\{0,1\}^{N-t}\\{\rm wt}(\bm{y})=w-k}}
f(w)|\bm{y}\rangle\right).
\end{align*}
By the condition (D2), $\langle U_{1}^{k}|U_{1}^{k}\rangle=\langle U_{2}^{k}|U_{2}^{k}\rangle\neq0$.
Set $l_k\in\mathbb{R}$ and $|u_1^k\rangle,|u_2^k\rangle\in\mathbb{C}^{2\otimes(N-t)}$ as
\begin{align*}
l_k\coloneqq\sqrt{\langle U_{1}^{k}|U_{1}^k\rangle},~~|u_1^k\rangle\coloneqq\frac{|U_1^k\rangle}{l_k},~~|u_2^k\rangle\coloneqq\frac{|U_2^k\rangle}{l_k}.
\end{align*}
Then, $\langle u_{1}^{k}|u_{1}^{k}\rangle=\langle u_{2}^{k}|u_{2}^{k}\rangle=1$ holds.
Hence, we have
\begin{align*}
|\Psi_k\rangle&=\alpha|U_1^k\rangle+\beta|U_2^k\rangle=l_k(\alpha|u_1^k\rangle+\beta|u_2^k\rangle)
\end{align*}
by Equations (\ref{psik}) and (\ref{cons}),
In the case $(k_1,b_1)\neq(k_2,b_2)$, we obtain $\langle u_{b_1}^{k_1}|u_{b_2}^{k_2}\rangle=0$ by the condition (D3).
\end{proof}

  \section{Proof of Lemma \ref{lem6}}\label{prf6}
\begin{proof}
In the case $k\in\{0,1,\dots,t\}$, we have
\begin{align*}
p(k)&={\rm Tr}(P_kD_P(|\Psi\rangle\langle\Psi|))\\
& ={\rm Tr}\left(\sum_{\bm{x}\in W_k}|\bm{x}\rangle\langle\bm{x}|\sum_{k'=0}^t{\binom{\,t\,}{k'}}|\Psi_{k'}\rangle\langle\Psi_{k'}|\right)\\
& ={\rm Tr}\left(\binom{\,t\,}{k}|\Psi_k\rangle\langle\Psi_k|\right)\\
& ={\rm Tr}\left(\binom{\,t\,}{k}{l_k}^2(\alpha|u_1^k\rangle+\beta|u_2^k\rangle)(\overline{\alpha}\langle u_1^k|+\overline{\beta}\langle u_2^k|)\right)\\
& =\binom{\,t\,}{k}{l_k}^2\left(|\alpha|^2\langle u_1^k|u_1^k\rangle+|\beta|^2\langle u_2^k|u_2^k\rangle\right)\\
& =\binom{\,t\,}{k}{l_k}^2
\end{align*}
by Lemmas \ref{lem2} and \ref{lem4}.
In the case $k=\varnothing$, 
it is clear that $p(\varnothing)={\rm Tr}(P_\varnothing D_P(|\Psi\rangle\langle\Psi|))=0$.

Given that outcome $k\in\{0,1,\dots ,t\}$ occurred, by Lemma \ref{lem2}, the quantum state immediately after the measurement is 
\begin{align*}
\frac{P_kD_P(|\Psi\rangle\langle\Psi|)P_k}{{\rm Tr}(P_kD_P(|\Psi\rangle\langle\Psi|))}&=\frac{P_k\left(\sum_{k'=0}^t\binom{t}{k'}|\Psi_{k'}\rangle\langle\Psi_{k'}|\right)P_k}{\binom{t}{k}{l_k}^2}\\
&=\frac{\binom{t}{k}|\Psi_k\rangle\langle\Psi_k|}{\binom{t}{k}{l_k}^2}\\
&=\frac{1}{{l_k}^2}|\Psi_k\rangle\langle\Psi_k|.
\end{align*}
\end{proof}

%%%%%%%%%%%%%%%%%%%%%%%%%%%%%%%%%%%%%%%%%%%%%%%%%%%%%%%%%%%%%%%%%%%%
%%%%%%%%%%%%%%%%%%%%%%%%%%%%%%%%%%%%%%%%%%%%%%%%%%%%%%%%%%%%%%%%%%%%
%%%%%%%%%%%%%%%%%%%%%%%%%%%%%%%%%%%%%%%%%%%%%%%%%%%%%%%%%%%%%%%%%%%%

\bibliography{bibtex}

\end{document}